\newtheorem{theorem}{Theorem}[section]
\newtheorem{theorem*}{Theorem}
\newtheorem{cor*}{Corollary}
\theoremstyle{definition}
\newtheorem{definition*}{Definition}
\theoremstyle{remark}
\newtheorem{remarks*}{Remarks}
\numberwithin{equation}{section}
\newcommand{\sfU}{{\mathsf U}}
\newcommand{\sfG}{{\mathsf G}}
\newcommand\cE{\mathcal{E}}
\newcommand\cF{\mathcal{F}}
\newcommand\cH{\mathcal{H}}
\newcommand\cK{\mathcal{K}}
\newcommand\cP{\mathcal{P}}
\newcommand\CC{\mathbb C}
\newcommand\RR{\mathbb R}
\newcommand\ZZ{\mathbb Z}
\newcommand{\sfSU}{{\mathsf{SU}}}
\newcommand{\cR}{{\mathcal R}}
\begin{document}

\title[Spherical T-Duality and the spherical Fourier-Mukai transform]
{Spherical T-Duality and\\ the spherical Fourier-Mukai transform}

\author[P Bouwknegt]{Peter Bouwknegt}

\address[Peter Bouwknegt]{
Mathematical Sciences Institute, and
Department of Theoretical Physics,
Research School of Physics and Engineering,
The Australian National University,
Canberra, ACT 2601, Australia}

\email{peter.bouwknegt@anu.edu.au}

\author[J Evslin]{Jarah Evslin}

\address[Jarah Evslin]{
High Energy Nuclear Physics Group,
Institute of Modern Physics,
Chinese Academy of Sciences,
Lanzhou, China}

\email{jarah@impcas.ac.cn}

\author[V Mathai]{Varghese Mathai}

\address[Varghese Mathai]{
Department of Pure Mathematics,
School of  Mathematical Scienes,
University of Adelaide,
Adelaide, SA 5005,
Australia}

\email{mathai.varghese@adelaide.edu.au}


\begin{abstract}
In \cite{BEM14,BEM142}, we introduced spherical T-duality, which relates pairs of the form $(P,H)$ consisting of an oriented
$S^3$-bundle $P\rightarrow M$ and a 7-cocycle $H$ on $P$ called the 7-flux.  Intuitively, the spherical T-dual is another such pair $(\hat P, \hat H)$ and spherical T-duality exchanges the 7-flux
with the Euler class, upon fixing the Pontryagin class and the second Stiefel-Whitney class.
Unless $\mathrm{dim}(M)\leq 4$, not all pairs admit spherical T-duals and the spherical T-duals are not always unique. 
In this paper, we define a canonical
Poincar\'e virtual line bundle $\cP$ on $S^3 \times S^3$ (actually also for $S^n\times S^n$)
and the spherical Fourier-Mukai transform, which
 implements a degree shifting
isomorphism in K-theory on the trivial $S^3$-bundle. This is then used
 to prove that all spherical T-dualities induce natural degree-shifting isomorphisms between  the 7-twisted K-theories
of the pairs $(P,H)$ and $(\hat P, \hat H)$ when $\mathrm{dim}(M)\leq 4$, improving our earlier results.  
\end{abstract}


\keywords{
Spherical T-duality;  oriented sphere bundles; Poincar\'e virtual line bundle, spherical Fourier-Mukai transform; higher twisted K-theory.}

\subjclass[2010]{Primary  81T30}

\maketitle

\section{Introduction}

Recall that the renowned Poincar\'e line bundle $\cP\to S^1\times S^1$ is tautologically defined
and comes with a canonical connection whose curvature is the standard symplectic 2-form on
$S^1\times S^1$. More generally, it is defined in the holomorphic context
on a polarised abelian variety in Mumford \cite{Mum70},  chapters 10-13,
where it was used to study fine moduli problems. It was then used by Mukai \cite{Mukai} to give
an equivalence of derived categories of coherent sheaves on an abelian variety with its dual abelian variety.
In the smooth context, Hori \cite{Hori99} used the Poincar\'e line bundle to give a (shifted) equivalence
of K-theories, and thereby establishing the equivalence of charges in type IIA and type IIB string theories
in the absence of background fluxes. In \cite{BEM,BEM2} (see also \cite{BS}) a deep extension was made
for principal torus bundles with nontrivial fluxes, where an equivalence of twisted K-theories was derived
but importantly that there was a change in spacetime topology in general for the first time.

In this paper (section \ref{sect:poincare}) we define a Poincar\'e virtual line bundle
$$\cP \longrightarrow \sfSU(2)\times \sfSU(2)$$
for the first time, 
and it represents the diagonal class in K-theory and
implements a canonical equivalence of K-theories in the case of trivial $\sfSU(2)$-bundles as shown in section \ref{sect:FM}.
This can be viewed an an analog of Hori's result that was mentioned above. All this is generalised from $\sfSU(2)=S^3$ to general spheres $S^n$ 
in section  \ref{sect:Tduality}.

In \cite{BEM14,BEM142}, we introduced a new kind of duality for string theory (M-theory), termed spherical T-duality,
for 7D spacetimes that are compactified as $\sfSU(2)$-bundles with 7-flux over 4D manifolds,
\begin{equation}\label{SU(2)4}
\begin{CD}
\sfSU(2) @>>> \,  P \\
&& @V \pi VV \\
&& M \end{CD}
\end{equation}
In \cite{BEM14} we dealt only with principal $\sfSU(2)$-bundles with 7-flux, whereas in \cite{BEM142} we dealt with 
the more general case of (oriented) $\sfSU(2)$-bundles  with 7-flux that were not necessarily principal bundles. 
We showed that a principal $\sfSU(2)$-bundle with 7-flux, had a unique spherical T-dual principal $\sfSU(2)$-bundle with 
T-dual 7-flux, where the 7-flux gets exchanged with the 2nd Chern number, and there is an equivalence of 7-twisted
cohomologies and 7-twisted K-theories (modulo an extension problem), see  \cite{BEM14}. On the other hand, a 
non-principal (oriented) $\sfSU(2)$-bundle  with 7-flux can have infinitely many spherical T-duals that are also
non-principal (oriented) $\sfSU(2)$-bundles  with 7-flux, and once again there is an equivalence of 7-twisted
cohomologies and 7-twisted K-theories (modulo an extension problem), see  \cite{BEM142}. The problem in this case is that 
non-principal (oriented) $\sfSU(2)$-bundles on simply-connected 4-manifolds are classified by the 2nd Chern number (or Euler number) as well as
the Pontryagin number and the 2nd Stiefel-Whitney class. So if we fix the the Pontryagin number, and the 2nd Stiefel-Whitney class,
we again get a unique spherical T-dual in the non-principal case also.  
Since the initial version of this paper was posted on the arxiv, the interesting paper \cite{LSW} appeared on the arxiv, which formulated a generalization of spherical T-duality 
to possibly nonorientable sphere bundles, and proved that it induces an isomorphism on a class of twisted cohomology theories that
includes algebraic K-theory.

In  \cite{BEM14,BEM142}, we argued that the
7-twisted cohomology and the 7-twisted K-theory which featured in our main theorems classify certain conserved charges
in type IIB supergravity. We concluded that spherical T-duality provides a one to one map between conserved 
charges in certain topologically distinct compactifications and also a novel electromagnetic duality on the fluxes.

In section \ref{sect:Tduality}, we show that the Poincar\'e  virtual line bundle gives rise
to isomorphisms of 7-twisted K-theories for 7-dimensional principal $\sfSU(2)$-bundles with 7-fluxes, 
significantly improving our earlier results which proved this only modulo an extension problem.  In this version of our paper, we use the higher sphere formalism of Ref.~\cite{LSW} to generalize this isomorphism to $S^n$ bundles with $(2n+1)$-flux.
We also compute the spherical T-duality group, the twisted K-theories of $S^3$ bundles over simply connected, oriented four manifolds and speculate on links with String theory/M-theory.

\tableofcontents


\section{Poincar\'e element and spherical Fourier-Mukai transform in K-theory}\label{sect:FM}
In this section, we define Poincar\'e element and spherical Fourier-Mukai transform on the K-theory of {\em trivial} $\sfSU(2)$-bundles.
This can be generalised to $S^n$ bundles in a striaghtforward way. The spherical Fourier-Mukai transform exchanges the tensor
product and convolution operations on these K-theories, showing that it is the geometric analog of the usual Fourier transform.
In later sections, we will define the spherical Fourier-Mukai transform for {\em non-trivial} $\sfSU(2)$-bundles over 4D manifolds 
with 7-flux.

The {\em Poincar\'e element} $[\cP]$ over $\sfSU(2) \times \widehat{\sfSU(2)}$, where
$\widehat{\sfSU(2)}= \sfSU(2)$,  is
the diagonal class in
$$
K^0(\sfSU(2) \times \widehat{\sfSU(2)}) \cong K^0(\sfSU(2)) \otimes K^0(\widehat{\sfSU(2)}) \oplus  
K^1(\sfSU(2)) \otimes K^1(\widehat{\sfSU(2)})\,,
$$
that is $[\cP] = 1 \otimes \widehat 1 + \zeta \otimes \widehat \zeta$,
where $\zeta \in K^1(\sfSU(2))$ and $\widehat \zeta \in K^1(\widehat{\sfSU(2)})$ are the generators,
represented by degree 1 maps $\sfSU(2)\longmapsto \mathsf{U}(N), \, N\gg 0$. Later on, we will describe
a canonical   vector bundle representative of $[\cP]$.\\

Consider the trivial $\sfSU(2)$-bundle $P=M\times \sfSU(2)$. Consider the commutative diagram

\begin{equation*}
\xymatrix{  & M \times \sfSU(2) \times \widehat{\sfSU(2)} \ar[dl]_{\widehat p} \ar[dr]^{p} &  \\
P=M \times \sfSU(2) \ar[dr]_\pi  && M \times \widehat{\sfSU(2)} = \widehat{P}  \ar[dl]^{\widehat \pi} \\
& M &  }
\end{equation*}

\medskip

\begin{theorem} For $E$ a vector bundle over $P$, define the spherical Fourier-Mukai transform as
\begin{equation*}
\cF [E] = p_* \, ( \widehat p^*\,  [E] \otimes [\cP] )\,,
\end{equation*}
giving rise to the spherical Fourier-Mukai transform in (compactly supported) K-theory,
\begin{equation*} \begin{CD}
\cF\ :\ K^i_c(P) @>\cong >> K^{i+1}_c(\widehat{P})\,.
\end{CD}
\end{equation*}
\end{theorem}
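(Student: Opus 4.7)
The plan is a direct computation via the Künneth formula, exploiting triviality of both bundles and the elementary structure of $K^*(\sfSU(2))$. Since $\sfSU(2)\cong S^3$, one has $K^*(\sfSU(2))=\ZZ[\zeta]/(\zeta^2)$ with $|\zeta|=1$; the relation $\zeta^2=0$ follows from graded commutativity together with $K^0(S^3)=\ZZ$ being torsion-free. Because $K^*(S^3)$ is free, the K-theoretic Künneth theorem applies on the nose and yields
\begin{equation*}
K^i(P) \;=\; K^i(M)\;\oplus\;K^{i-1}(M)\cdot\zeta,
\end{equation*}
with analogous decompositions for $\widehat P$ and for the top space $M\times\sfSU(2)\times\widehat{\sfSU(2)}$. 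I will write an arbitrary class as $[E]=a+b\,\zeta$ with $a\in K^i(M),\ b\in K^{i-1}(M)$, identifying $a,b,\zeta,\widehat\zeta$ with their pullbacks.

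Next, I multiply by the Poincar\'e class $[\cP]=1+\zeta\widehat\zeta$ (pulled back to the top space):
\begin{equation*}
\widehat p^{\,*}[E]\cdot[\cP] \;=\; (a+b\zeta)(1+\zeta\widehat\zeta) \;=\; a + b\zeta + a\zeta\widehat\zeta,
\end{equation*}
the would-be fourth term $\pm b\,\zeta^{2}\,\widehat\zeta$ vanishing by $\zeta^2=0$.

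Now I apply the Gysin pushforward $p_*$ along the trivial $K$-oriented fibration $p$ with fibre $\sfSU(2)\cong S^3$; this lowers degree by $3\equiv 1\pmod 2$. Since $p$ is trivial, $p_*$ is just fibre-integration $q_*:K^*(\sfSU(2))\to K^{*-3}(\mathrm{pt})$ applied factor-wise, and
\begin{equation*}
q_*(1)=0,\qquad q_*(\zeta)=1,
\end{equation*}
the first because $K^{-3}(\mathrm{pt})=K^1(\mathrm{pt})=0$, the second by the defining property of the Bott generator. The projection formula then gives
\begin{equation*}
\cF[E] \;=\; p_*(a) + p_*(b\zeta) + p_*(a\widehat\zeta\cdot\zeta) \;=\; 0 + b + a\widehat\zeta.
\end{equation*}

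Under the Künneth splitting $K^{i+1}(\widehat P)=K^{i+1}(M)\oplus K^i(M)\cdot\widehat\zeta$, the transform therefore sends $(a,b)\mapsto(b,a)$, which is manifestly bijective; its inverse is, up to a sign accounting for $\widehat{\widehat\zeta}=\pm\zeta$, the corresponding spherical Fourier--Mukai transform running in the opposite direction. The main (and essentially only) obstacle is careful bookkeeping of the signs arising from graded commutativity when commuting $\zeta$ past $\widehat\zeta$ and from the odd-dimensional $K$-orientation of $\sfSU(2)$, but these do not affect the conclusion that $\cF$ interchanges the two Künneth summands of $K^*(P)$ and hence implements the claimed degree-shifting isomorphism.
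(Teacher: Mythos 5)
Your proof is correct and follows essentially the same route as the paper: K\"unneth decomposition of $K^*(P)$ and $K^*(\widehat P)$, multiplication by $[\cP]=1+\zeta\widehat\zeta$, then pushforward along the fibre $\sfSU(2)\cong S^3$. The paper simply asserts that "an easy computation shows" $\cF(x_0\otimes 1 + x_1\otimes\zeta)=x_0\otimes\widehat\zeta + x_1\otimes 1$; you have supplied precisely that computation (recording $\zeta^2=0$, $q_*(1)=0$, $q_*(\zeta)=1$, and the projection formula), so your argument is a faithful elaboration of the published one rather than a different approach.
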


\begin{proof}
By the K\"unneth theorem,
$$
K^0_c(P) \cong K^0_c(M) \oplus K^1_c(M) \cong K^1_c(P)\,,
$$
and similarly
$$
K^0_c(\widehat  P) \cong K^0_c(M) \oplus K^1_c(M) \cong K^1_c(\widehat  P)\,.
$$

Now if $x\in K^0_c(P)$, then  $x=x_0 \otimes 1 + x_1 \otimes \zeta$ where $x_j \in K^j_c(M), \, j=0,1$.
Then an easy computation shows that
$$
\cF(x)=\cF(x_0 \otimes 1 + x_1 \otimes \zeta) = x_0 \otimes \widehat\zeta + x_1 \otimes 1 \,.
$$
It follows that
\begin{equation*} \begin{CD}
\cF\ :\ K^0_c(P) @>\cong >> K^{1}_c(\widehat{P})
\end{CD}
\end{equation*}
is an isomorphism.

Similarly, if $x\in K^1_c(P)$, then  $x=x_0 \otimes \zeta + x_1 \otimes 1$ where $x_j \in K^j_c(M), \, j=0,1$.
Then an easy computation shows that
$$
\cF(x)=\cF(x_0 \otimes \zeta + x_1 \otimes 1) = x_0 \otimes 1 + x_1 \otimes \widehat\zeta
$$
It follows that
\begin{equation*} \begin{CD}
\cF\ :\ K^1_c(P) @>\cong >> K^{0}_c(\widehat{P})
\end{CD}
\end{equation*}
is also an isomorphism.

\end{proof}

Define a commutative, associative products on $K^\bullet(\sfSU(2))$ given by
\begin{align*}
1\otimes 1 = 1\,, \qquad 1 \star 1 =0\,,\\
1\otimes \zeta = \zeta\,, \qquad 1 \star \zeta = 1\,,\\
\zeta\otimes \zeta =0\,, \qquad \zeta\star \zeta = \zeta\,,
\end{align*}
called the tensor product and convolution product (induced by the multiplication on $\sfSU(2)$ and Poincar\'e duality), respectively and with the standard compatibility relations making $K^\bullet(\sfSU(2))$ into a bialgebra.
This in turn defines commutative, associative products on $K^\bullet(P)$ and $K^\bullet(\widehat P)$, both equal to 
$K^\bullet(M)\otimes K^\bullet(\sfSU(2))$ and one has the following result, which justifies the nomenclature "spherical Fourier-Mukai 
transform" as it resembles the Fourier transform, and was first defined in the holomorphic context for the Poincar\'e line bundle
by Mukai in \cite{Mukai},

\begin{theorem}
The Fourier-Mukai transform in K-theory
\begin{equation*} \begin{CD}
\cF\ :\ K^i_c(P) @>\cong >> K^{i+1}_c(\widehat{P}) \,,
\end{CD}
\end{equation*}
takes the tensor product to the convolution product and the convolution product to the tensor product.
\end{theorem}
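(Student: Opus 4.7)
The plan is to reduce the statement to a case check on the $\sfSU(2)$ factor by exploiting $K^\bullet(M)$-linearity. Under the K\"unneth identification $K^\bullet(P)\cong K^\bullet(M)\otimes K^\bullet(\sfSU(2))$, the two products on $K^\bullet(P)$ are by construction $K^\bullet(M)$-bilinear, so only the $\sfSU(2)$ factor is affected by the choice between tensor and convolution. The formulas from the previous theorem exhibit $\cF$ as $\Id\otimes\phi$, where $\phi\colon K^\bullet(\sfSU(2))\to K^\bullet(\widehat{\sfSU(2)})$ is the basis swap $1\mapsto\widehat\zeta$, $\zeta\mapsto\widehat 1$. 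Thus, after stripping off the $K^\bullet(M)$-component, the two claimed identities reduce to verifying
$\phi(b\otimes d)=\phi(b)\star\phi(d)$ and $\phi(b\star d)=\phi(b)\otimes\phi(d)$
for $b,d$ ranging over the basis $\{1,\zeta\}$.

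I would then carry out the case check on the three unordered pairs $(1,1)$, $(1,\zeta)$, $(\zeta,\zeta)$ for each product. The organising observation is that the two displayed tables for $\otimes$ and $\star$ are interchanged by the substitution $1\leftrightarrow\zeta$: the tensor table reads exactly as the convolution table with these symbols swapped. Since $\phi$ is precisely this swap (with hats), both identities follow simultaneously. For example, $\phi(\zeta\otimes\zeta)=0=\widehat 1\star\widehat 1=\phi(\zeta)\star\phi(\zeta)$ and dually $\phi(1\star 1)=0=\widehat\zeta\otimes\widehat\zeta=\phi(1)\otimes\phi(1)$; the remaining cases are equally direct.

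There is no serious obstacle here. The only point one might worry about is the passage from pure tensors $a\otimes b$ to general elements, but this is automatic because the products and $\cF$ are additive and $K^\bullet(M)$-linear, so the case check on the four-element set $\{1,\zeta\}\times\{1,\zeta\}$ propagates to all of $K^\bullet(P)$. The real content of the theorem is the conceptual statement that the spherical Fourier-Mukai transform \emph{is} the involution exchanging the tensor and convolution products on $K^\bullet(\sfSU(2))$; the version on $P$ then follows by $K^\bullet(M)$-bilinearity.
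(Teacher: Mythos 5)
Your proposal is correct and is essentially the paper's proof, reorganised: the paper directly checks all six unordered cases $(x\otimes a)\,\otimes\,(y\otimes b)$ and $(x\otimes a)\star(y\otimes b)$ on $K^\bullet(P)$, which is the same finite case-check you perform after stripping off the $K^\bullet(M)$ factor. Your observation that $\cF=\Id\otimes\phi$ with $\phi$ the swap $1\leftrightarrow\zeta$, and that the tensor and convolution multiplication tables are interchanged by exactly this swap, is a tidier way of packaging what the paper's six lines of computation verify by hand, but it is not a different argument.
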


\begin{proof} In the notation above, 
$ (x\otimes 1) \otimes (y\otimes 1) = (x\otimes y) \otimes 1$ therefore $  \cF(  (x\otimes 1) \otimes (y\otimes 1) ) =
\cF ((x\otimes y) \otimes 1) = (x\otimes y) \otimes \zeta$. On the other hand, $ \cF (x\otimes 1) \star  \cF (y\otimes 1)
=  (x\otimes \zeta) \star  (y\otimes \zeta) =  (x\otimes y) \otimes \zeta$.\\

$ (x\otimes 1) \otimes (y\otimes \zeta) = (x\otimes y) \otimes \zeta$ therefore $  \cF(  (x\otimes 1) \otimes (y\otimes \zeta) ) =
\cF ((x\otimes y) \otimes \zeta) = (x\otimes y) \otimes 1$. On the other hand, $ \cF (x\otimes 1) \star  \cF (y\otimes \zeta)
=  (x\otimes \zeta) \star  (y\otimes 1) =  (x\otimes y) \otimes 1$.\\

$ (x\otimes \zeta) \otimes (y\otimes \zeta) = 0$ therefore $  \cF(  (x\otimes 1) \otimes (y\otimes \zeta) ) = 0$.
On the other hand, $ \cF (x\otimes \zeta) \star  \cF (y\otimes \zeta)
=  (x\otimes 1) \star  (y\otimes 1) =  0$.\\

This shows that $ \cF$ takes tensor product to convolution.\\

$ (x\otimes 1) \star (y\otimes 1) = 0$ therefore $  \cF(  (x\otimes 1) \star (y\otimes 1) ) = 0$.
On the other hand, $ \cF (x\otimes 1) \otimes  \cF (y\otimes 1)
=  (x\otimes \zeta) \otimes  (y\otimes \zeta) = 0$.\\

$ (x\otimes 1) \star (y\otimes \zeta) = (x\otimes y) \otimes 1$ therefore $  \cF(  (x\otimes 1) \star (y\otimes \zeta) ) =
\cF ((x\otimes y) \otimes 1) = (x\otimes y) \otimes \zeta$. On the other hand, $ \cF (x\otimes 1) \otimes  \cF (y\otimes \zeta)
=  (x\otimes \zeta) \otimes (y\otimes 1) =  (x\otimes y) \otimes \zeta$.\\

$ (x\otimes \zeta) \star (y\otimes \zeta) = (x\otimes y) \otimes \zeta$ therefore $  \cF(  (x\otimes 1) \star (y\otimes \zeta) ) =
\cF( (x\otimes y) \otimes \zeta) =  (x\otimes y) \otimes 1$.
On the other hand, $ \cF (x\otimes \zeta) \otimes  \cF (y\otimes \zeta)
=  (x\otimes 1) \otimes (y\otimes 1) =   (x\otimes y) \otimes 1$.\\

This shows that $ \cF$ takes convolution to tensor product, completing the proof.

\end{proof}

\section{Poincar\'e virtual line bundle}\label{sect:poincare}
Here we give natural vector bundle realisation of the Poincar\'e element of the last section, and call it the Poincar\'e virtual line bundle. 
We also briefly recall higher twisted K-theory and give computations of it for oriented $S^3$-bundles with 7-flux over 
simply connected oriented compact 4D manifolds.

\subsection{Vector bundle realization of the Poincar\'e element}\label{sec:poincare}

From the long exact sequence in homotopy for the principal bundle $\sfSU(2) \to \sfSU(3) \to S^5$, we deduce that
$\pi_5(\sfSU(3))\cong \ZZ$. Let $h: S^5 \to \sfSU(3)$ be a generator, and use it as a clutching function on the equator
of $S^6$ to determine
a principal $\sfSU(3)$-bundle $P$ over $S^6$. In fact, standard arguments in algebraic topology show that 
principal $\sfSU(3)$-bundles $P$ over $S^6$ are classified by the third Chern class $c_3(P)\in 2\ZZ$, cf. \cite{HK}.
Now $[S^3 \times S^3, S^6] \cong H^6(S^3 \times S^3, \ZZ) \cong \ZZ$, so there is a degree 1 map
$g:  S^3 \times S^3 \to S^6$. Then $g^*(P)$ is a principal $\sfSU(3)$-bundle over $S^3 \times S^3$, whose associated complex
vector bundle $\cP$ of rank 3 represents the Poincare object. Note that the restriction of $\cP$  to the submanifolds $S^3\times\{x\}$ and
$\{x\}\times S^3$ are trivializable, similar to the Poincare line bundle on $S^1\times S^1$.

When $P=\mathsf{G}_2$, that is, $\sfSU(3) \to \mathsf{G}_2 \to S^6$, then 
a theorem of Bott says that the top Chern class ${c_n}$ of any (complex) vector bundle on ${S^{2n}}$ is divisible by ${(n-1)!}$, 
so in particular,
$c_3(P)=2$, so that $P$ is one of the bundles that we are searching for.
The associated rank 3 complex vector bundle $\cE$ over $S^6$ is the non-trivial generator of $K^0(S^6)$.
Recall that if $X$ and $Y$ are pointed spaces (i.e. topological spaces with distinguished basepoints $x_0$ and $y_0$)
the wedge sum of $X$ and $Y$, denoted $X\vee Y$, is the quotient space of the disjoint union of $X$ and $Y$ by the identification $x_0 \sim y_0$.
One can think of $X$ and $Y$ as sitting inside $X \times Y$ as the subspaces $X \times \{y_0\}$ and $\{x_0\} \times Y$. 
These subspaces intersect at a single point, $(x_0, y_0)$, the basepoint of $X \times Y$. So the union of these 
subspaces can be identified with the wedge sum $X \vee Y$.
Then the smash product of $X$ and $Y$, denoted $X\wedge Y$ is the quotient space $(X\times Y)/X\vee Y$. In particular, $S^3\wedge S^3$
is homeomorphic to $S^6$, and we will discuss this map explicitly below.
Therefore we get a canonical degree
1 continuous projection map $g:S^3\times S^3 \to S^6$, and we can pullback $\mathsf{G}_2$ via this projection map, giving rise to a
natural principal $\sfSU(3)$-bundle $g^*(\mathsf{G}_2)$ over $S^3\times S^3$. 
Let $g^*(\cE)$ be the associated rank 3 vector bundle over $S^3\times S^3$. Then $\cP=[g^*(\cE)] - [{\bf 1}^2]$ is the Poincar\'e virtual line bundle, 
which is well defined in K-theory and  is the non-trivial generator of $K^0(S^3\times S^3)$ as well as an automorphism of  $K^0(S^3\times S^3)$.
This construction is generalised in section \ref{sect:Tduality} where it is more topological.

\subsection{Smashing spheres} \label{smashsez}

To construct a Poincar\'e bundle with connection on $S^3\times S^3$ we will need an explicit formula for the smash 
product map.  In this subsection we will treat the general case $f:S^n\times S^n\to S^n\wedge S^n\cong S^{2n}$.  
The Poincar\'e bundle on $S^n\times S^n$ is constructed by pulling back a vector bundle with minimal nonzero 
Euler class from $S^{2n}$.  In the next two subsections we will restrict our attention to the two examples of interest, 
$n=1$ corresponding to ordinary T-duality and $n=3$ corresponding to spherical T-duality. For a general reference 
to this section, see \cite{Hatcher}.

We begin by recalling that $S^n$ is an $S^{n-1}$ fibration over the interval $I$ which degenerates to a point at the 
two endpoints $\{0,1\}\in I$.  For each point $x_i$ in the $i$th copy of $S^n$, where $i=1$ or $2$, let $r_i\in I$ and 
${\mathbf{v}}_i\in S^{n-1}\subset\RR^n$ be the associated points in $I$ and the unit sphere $S^{n-1}\subset\RR^n$.  
Note that when $r_i=0$ and $1$, all values of ${\mathbf{v}}_i$ are equivalent.  To write the map $f$, it will be convenient 
to embed $S^{2n}$ as the unit sphere in $\RR^{2n+1}$.  The function $f$ can therefore be decomposed into $2n+1$ 
functions $f_i:S^n\times S^n\to \RR$ representing the coordinates in $\RR^{2n+1}$.

We will also decompose $S^{2n+1}$ into an $S^{2n}$ fibration over the interval, where the interval will 
be correspond to the last coordinate in $\RR^{2n+1}$.  We assert furthermore that the $n$-vectors 
$(f_1,...,f_{n})$ and $(f_{n+1},...,f_{2n})$ are parallel to ${\mathbf{v}}_1$ and ${\mathbf{v}}_2$ respectively.  
More precisely, we impose
\begin{equation}
(f_1,...,f_{n})=\alpha_1(r_1,r_2) {\mathbf{v}}_1\,,\hspace{0.4cm} (f_{n+1},...,f_{2n})=\alpha_2(r_1,r_2) {\mathbf{v}}_2\,, \label{sferdecomp}
\end{equation}
where the $\alpha_i$ are nonnegative functions on $I\times I$.  Similarly we demand that $f_{2n+1}$ 
be independent of ${\mathbf{v}}_i$ and so we will write simply $f_{2n+1}(r_1,r_2)$ as a function $I\times I\to [-1,1]$.  
The smash product map $f$ is therefore defined by the three functions $f_{2n+1},\ \alpha_1$ and $\alpha_2$ on $I\times I$.

By the definition of the smash product, $f(S^n\vee S^n)$ is a single point, let it be $({\mathbf{0}}^{2n},-1)$.  
Choose the decomposition of $S^n$ such that $S^n\vee S^n$ is the subset of $S^n\times S^n$ such that $r_1 r_2=0$.  
Then we learn that
\[
f_{2n+1}(0,r_2)=f_{2n+1}(r_1,0)=-1,\hspace{0.4cm} \alpha_i(r_1,0)=\alpha_i(0,r_2)=0\,.
\]
As we would like the smash product $f$ to be smooth, we define
\begin{equation}
f_{2n+1}(r_1,r_2)=-1+r_1 r_2 \tilde{f}(r_1,r_2),\hspace{0.4cm} \alpha_i(r_1,r_2)=r_1r_2\tilde{\alpha}_i(r_1,r_2).\label{fdef}
\end{equation}

The smash product map $f$ must also have degree 1.  For this it is sufficient that the preimage 
of $({\mathbf{0}}^{2n},1)$ contain a single point, which we will fix to be $(r_1,r_2)=(1,1)$.  For this purpose it is sufficient to fix
\[
\tilde{f}(1,1)=2 \,,
\]
and to demand that $\tilde{f}$ be everywhere nondecreasing in both $r_1$ and $r_2$.

Next, recall that all values of ${\mathbf{v}}_i$ are equivalent when $r_i=0$ and $1$.  Therefore $f$ must be 
independent of ${\mathbf{v}}_i$ when $r_i=0$ and $1$.  When $r_i=0$ this condition is satisfied, as the image 
is just $({\mathbf{0}}^{2n},-1)$.  What about $r_i=1$?  Recall that only $(f_1,...,f_{n})$ depends upon 
${\mathbf{v}}_1$ and $(f_{n+1},...,f_{2n})$ upon ${\mathbf{v}}_2$, as they are parallel.  
Therefore a necessary and sufficient condition is that each $n$-vector vanishes when the corresponding $r_i=1$.  In other words, we must impose
\begin{equation}
\tilde{\alpha}_1(1,r_2)=\tilde{\alpha}_2(r_1,1)=0\,. \label{condeq}
\end{equation}

Finally, we must impose that the image of $f$ is actually on the unit sphere
\begin{equation}
1=\sum_{i=1}^{2n+1} f_i^2=f_{2n+1}^2(r_1,r_2)+\alpha_1^2(r_1,r_2)+\alpha_2^2(r_1,r_2) \,, \label{sfera}
\end{equation}
and so
\[
\tilde{f}^2+\tilde{\alpha}^2_1+\tilde{\alpha}^2_2-\frac{2\tilde{f}}{r_1r_2}=0\,.
\]
Now we are done, any triplet $(\tilde{f},\tilde{\alpha}_1,\tilde{\alpha}_2)$ of functions on $I\times I$ 
satisfying the above conditions will induce the smash product $S^n\times S^n\to S^{2n}$.

\subsection{Brief review of 7-twisted K-theory and some calculations}\label{calc}

To define K-theory on the closed, oriented 7D manifold $P$, twisted by a closed 7-form $H$ representing $k$ times the 
generator of $H^7(P,\ZZ)$, we first recall from Corollary 4.7 in \cite{DP} that the generator of 
$H^7(S^7,\ZZ)$ corresponds to the (higher) Dixmier-Douady invariant of an algebra bundle 
$\cE\to S^7$ with fibre a stabilized infinite Cuntz $C^*$-algebra $O_{\infty} \otimes \cK$. 
Now let $f:P \to S^7$ be a degree $k$ continuous map, then $f^*(\cE) \to P$ is an algebra 
bundle with fibre a stabilized infinite Cuntz $C^*$-algebra $O_{\infty} \otimes \cK$ and 
Dixmier-Douady invariant equal to $k$ times the generator of $H^7(P,\ZZ)$. 
Then, by \cite{DP2}, the 7-twisted K-theory is defined as $K^*(P, H) = K_*(C_0(P, f^*(\cE)))$, 
where $C_0(P, f^*(\cE))$ denotes continuous sections of $f^*(\cE)$ vanishing at infinity. 
This shows that $K^*(P, H)$ is well defined, although we will not use the explicit construction.

Our goal here is to compute the 7-twisted K-theory of the total space of a not necessarily principal  $SU(2)$-bundle $P$ with 7-flux $H$ 
over a 
compact simply-connected, connected, oriented 4D manifold $M$. 
The strategy of proof is as follows. We first compute the K-theory of $M$ and then use it to compute 
the K-theory of $P$ using the Gysin sequence in K-theory (cf. \cite{EM}). Using the computation of the 
K-theory of $P$, it is then an easy matter to compute the 7-twisted K-theory of $P$. 

First note that by our hypotheses on $M$ that $H^2(M, \ZZ) \cong \ZZ^{b_2}$ is torsion free, where $b_2=b_2(M)$ is the 2nd Betti number of $M$.  Also $H^0(M, \ZZ)\cong \ZZ\cong H^4(M, \ZZ)$, 
and finally that $H^1(M, \ZZ) = 0 = H^3(M, \ZZ)$.

Consider the 4D stage of the Postnikov tower for $BSU$, denoted by $BSU^{(4)}$. Then $\pi_i(BSU^{(4)}) = \pi_i(BSU)$ for $i\le 4$ and 
$\pi_i(BSU^{(4)})=0$ for $i>4$. That is, $BSU^{(4)}=K(\ZZ,4)$ and  the natural map 
$[M, BSU] \cong [M, BSU^{(4)}] = [M, K(\ZZ, 4)] = H^4(M, \ZZ)\cong \ZZ$ and the isomorphism is given by the 2nd Chern class.

The exact sequence $1\to SU\to U \stackrel{det}{\to} U(1)\to 1$ gives rise to a fibration $BSU\to BU \stackrel{Bdet}{\to} BU(1)$,  and to an exact sequence 
$0\to [M, BSU] \to [M, BU]\stackrel{Bdet}{\to} [M, BU(1)]\to 0$. That is, $0\to H^4(M, \ZZ)\to \tilde K^0(M) \to H^2(M, \ZZ)\to 0$, where $\tilde K^0$ denotes the reduced K-theory. But this 
last sequence splits, as line bundles on $M$ naturally define elements in $ \tilde K^0(M)$. Therefore,
$$
K^0(M) \cong \ZZ \oplus H^2(M, \ZZ) \oplus \ZZ,
$$
and is consistent  with the computation using the Atiyah-Hirzebruch spectral sequence.
Similar arguments show that $K^1(M)=0$, since the odd dimensional cohomology of $M$ vanishes, and completes the computation of the K-theory of $M$.

Next we use this computation and the Gysin sequence (cf. \cite{EM}) to compute the K-theory of $P$, where $S^3\to P\stackrel{\pi}{\to} M$ is the relevant 
fibre bundle, which becomes in this case,
$$
0\to K^1(P) \stackrel{\pi_{*}}{\longrightarrow} K^0(M) \stackrel{e\cup}{\longrightarrow} K^0(M) \stackrel{\pi^{*}}{\longrightarrow}  K^0(P) \to 0.
$$
Here $e\cup \xi = \chi(P)\, rank(\xi) \,pt_!$, where $ \chi(P)= c_2(P)$ is the Euler characteristic or the 2nd Chern number and $pt:\{x\}\hookrightarrow M$ is the inclusion of a point, and $pt_!$ the associated Gysin map, $pt_! \in K^0(M)$. We deduce that 
$$
K^1(P) \cong \ZZ^{b_2+1} 
$$
if $\chi=\chi(P) \ne 0$ and $K^1(P) = \ZZ^{b_2+2}$ if $\chi=0$, where $b_2=b_2(M)$ is the 2nd Betti number of $M$. Also 
$$
K^0(P) \cong \ZZ^{b_2+1} \oplus \ZZ/\chi \ZZ
$$
if $\chi=\chi(P) \ne 0$ and $K^0(P) = \ZZ^{b_2+2}$ if $\chi=0$. 

Now let $pt:\{x\}\hookrightarrow P$ be the inclusion of a point, and $pt_!$ the associated Gysin map, $pt_! \in K^1(P)$. Then we can view
the 7-flux $H$ as an element in $K^1(P)$ as follows. Firstly, $H \in H^7(P, \ZZ)\cong \ZZ$ can be identified as an integer $h\in \ZZ$.
Then let $H \cup \xi = h \, rank(\xi)\, pt_!$, and $K^*(P, H)$ is the cohomology of the complex,
$$
0\to K^0(P) \stackrel{H\cup}{\longrightarrow} K^1(P) \to 0
$$
Therefore 
$$
K^0(P, H) \cong \ZZ^{b_2} \oplus \ZZ/\chi \ZZ,\qquad \text{if} \quad h\ne 0\, \&\, \chi\ne 0
$$
and 
$$
K^1(P, H) \cong \ZZ^{b_2} \oplus \ZZ/h\ZZ, \qquad \text{if} \quad h\ne 0\, \&\, \chi\ne 0.
$$
In the case when $\chi=0$ and $h\ne 0$,
$$
K^0(P, H) \cong \ZZ^{b_2+1},  \qquad \text{if} \quad \,  h\ne 0 \&\, \chi= 0
$$
and 
$$
K^1(P, H) \cong \ZZ^{b_2+1}, \qquad \text{if} \quad h\ne 0\, \&\, \chi= 0.
$$

Now recall from the discussion in \cite{BEM142} that oriented $S^3$ bundles over $M$ as above are determined by fixing the first Pontryagin class, the second Stiefel Whitney class and the Euler class (or 2nd Chern class).
Spherical T-duality asserts that upon fixing the first Pontryagin class $p_1(P)=p_1(\hat P)$ (where $\hat P$ denotes the spherical T-dual $S^3$ bundle over $M$) and the second Stiefel-Whitney class $w_2(P)=w_2(\hat P)$, then there is an exchange $\hat h=\chi$ and $\hat \chi=h$, where $\hat \chi = \chi(\hat P)$ is the Euler class of $\hat P$ and 
the spherical T-dual 7-flux $\hat H = \hat h\, \text{vol}_{\hat P}$. Therefore $K^0(P, H)  \cong K^1(\hat P, \hat H) $ and $K^1(P, H)  \cong K^0(\hat P, \hat H) $. In the next section, we will argue that this isomorphism is determined by the Poincar\'e 
virtual line bundle $\cP$ defined earlier in the section.

\section{Spherical T-duality Induces an Isomorphism on Higher Twisted K-Theory}\label{sect:Tduality}

The goal of this section is to show that the Poincar\'e virtual line bundle described in the previous section,  induces the spherical T-duality isomorphism on 7-twisted K-theories on the total spaces of $\sfSU(2)$-bundles (with 7-flux) over 4D manifolds  that are spherically T-dual. This makes explicit the isomorphisms in Section \ref{calc}. Our proof is modelled on 
that of \cite{BS} (see also \cite{LSW}). Some of the arguments generalise to higher dimensions, as indicated.

\subsection{Poincar\'e Bundles on General Spheres}

Toplogically, it is not difficult to extend the above construction of the Poincar\'e bundle to arbitrary dimensional spheres.  The smash product
$$
f:S^n\times S^n\to S^{2n}
$$  
can be used to pull back the principal ${\sfSU(n)}$ bundle $P\rightarrow S^{2n}$ whose clutching map generates $\pi_{2n-1}(\sfSU(n))=\ZZ$.  The virtual line bundle equal to $f^*P$ minus the rank $n-1$ trivial bundle is the Poincar\'e virtual line bundle.  We will restrict our attention to $n>0$.

\subsection{Thom class}

In this section, we will adapt the proof of \cite{BS}, generalized to higher dimensional sphere bundles following \cite{LSW}, to show that spherical T-duality yields an isomorphism of twisted K-theory on orientable $S^n$ bundles $P$ over $(n+1)$-dimensional orientable manifolds $M$, where the K-theory is twisted by an element $H\in H^{2n+1}(P)\cong \ZZ$.  The case treated in \cite{BS} corresponds to $n=1$.  In that case, given the pair $(P,H)$ the T-dual is unique.  More generally, the T-dual is not unique and so we will show that the twisted K-theories of all T-duals are isomorphic.  Uniqueness does reappear in several cases, such as principal $n=3$ bundles \cite{BEM14} or even nonprincipal $n=3$ bundles if the  Pontryagin class and the second Stiefel-Whitney class are held fixed.

The novelty in the treatment of T-duality in Ref.~\cite{BS} is the introduction of a sphere bundle $S(V)$ which is fiberwise the join of $P$ and $\hat{P}$.  In the special cases $n=1$ and $n=3$, the fibers of $P$ and $\hat{P}$ are the Lie groups $\sfU(1)$ and $\sfSU(2)$.  If the bundle is principal then $S(V)$ can also be constructed as the $S^{2n+1}$ subbundle of $V\rightarrow M$, defined to be the direct sum of the $\CC^{(n+1)/2}$ bundles associated to $P$ and $\hat{P}$.   More generally, following \cite{LSW}, the join construction may be realized in the spirit of \cite{BS} by embedding $P$ and $\hat{P}$ in $\RR^{n+1}$ bundles $E$ and $\hat{E}$, whose direct sum again yields $V$.  

The Thom class is an element $Th\in H^{2n+1}(S(V),\ZZ)=\ZZ$.    Distinct Thom classes are related by pullbacks of classes on $M$.  However, in our case $M$ is an $(n+1)$-manifold and so $H^{2n+1}(M)=0$.  Therefore in the case at hand, the Thom class is unique and in fact, as its integral on $S(V)$ generates $H^{n+1}(M)=\ZZ$, the Thom class must represent the element
$$
[Th]=1\in H^{2n+1}(S(V),\ZZ)=\ZZ.
$$
Similarly, although the Gysin sequence can be used to show that the cup product of the Euler classes of $P$ and $\hat{P}$ vanishes \cite{BEM14}, the Euler class of $S(V)$ lies in $H^{2n+2}(M)=0$ and so again vanishes, and so the Thom class exists.

The correspondence space $P\times_M \hat{P}$ may be constructed as in the cases $n=1$ \cite{BEM} and $n=3$ \cite{BEM14} and, as was done there, we impose that
\begin{equation}
p^*\hat{H}=\hat{p}^* H \label{hiso}
\end{equation}
at the level of cohomology.   Ref.~\cite{BS} introduces another characterization of $H$ and $\hat{H}$ as
\begin{equation}
H=i^* Th,\hspace{.4cm} \hat{H}=\hat{i}^*Th \label{hdef}
\end{equation}
where $i:P\rightarrow S(V)$ and $\hat{i}:\hat{P}\rightarrow S(V)$ are the inclusions of the $S^n$ bundles in $S(V)$, described fiberwise by the join operation or the sphere subbundle of $E\oplus\hat{E}$. 

In fact the definition (\ref{hdef}) implies (\ref{hiso}), as is shown in the case $n=1$ in Lemma 2.13 of \cite{BS} and more generally in \cite{LSW}.  The proof for general $n$ proceeds identically.  First, note that the $S^n$ and $\hat{S}^n$ in the join construction $S^{2n+1}=S^n*S^n$ are homotopic.   This homotopy yields a homotopy from $\hat{i}\circ p:P\times_M\hat{P}\rightarrow S(V)$ to  $i\circ \hat{p}:P\times_M\hat{P}\rightarrow S(V)$, and so as in the case $n=1$
\[
p^* \hat{H} = p^* \hat{i}^* Th=\hat{p}^*i^*Th=\hat{p}^* H .
\]

\subsection{7-twisted K-theory}

To adapt the proof of Ref.~\cite{BS} that T-duality is an isomorphism twisted K-theory to the case of $S^n$ bundles and higher twisted K-theory, we will need to use the fact that (2n+1)-twisted K-theory on an oriented (2n+1)-manifold\footnote{Note that this is {\it{not}} the most general twisted K-theory on a (2n+1)-manifold, we do not expect that our conclusions can be generalized to other twists.}  indeed is a twisted cohomology theory, and in particular that it satisfies several key properties.   As twists correspond to elements of $H^{2n+1}$, they correspond to maps $f$ to $K(\ZZ,2n+1)$.  Automorphisms of these are given by maps to the free loopspace $LK(\ZZ,2n+1)$ such that a given map $f$ is fixed.  Homotopy classes of automorphisms of (2n+1)-twists therefore correspond to maps to $K(\ZZ,2n)$, or equivalently $2n$-cohomology classes over the integers.  

The third paragraph of 3.1.5 of \cite{BS} describes a special case of the relation between these automorphisms and the twists themselves.  This is essentially a higher gerbe generalization of the clutching construction.  In the case $n=1$, it is the statement that a 1-gerbe on $X=A\cup B$ can be created by gluing together trivial gerbes on $A$ and $B$ with the transition function given by a line bundle on $A\cap B$.   This gives an isomorphism between $H^3(X,\ZZ)$ and $H^2(A\cap B)$, corresponding to an isomorphism between 3-classes of gerbes on $X$ and Chern classes of line bundles on $A\cap B$.    A similar construction applies to maps $X\rightarrow K(\ZZ,2n+1)$.  These can be trivialized on $A$ and $B$ but glued together using a transition map $A\cap B\rightarrow K(\ZZ,2n)$, yielding the desired isomorphism.  It would be interested to have a higher gerbe interpretation of this construction using the formalism of Ref.~\cite{LSW}, whose $\Lambda$ operation is an example of such an automorphism.

The application to suspensions in \cite{BS} is essentially an example in which $X$ is homotopic to $\Sigma(A\cap B)$.  The isomorphism is guaranteed to exist in this case as suspensions are homotopically the inverse of the based loop spaces considered above.

As is described, in the case $n=1$, in paragraph 3.1.5 of \cite{BS}, homotopies $h:\RR\times Y\rightarrow X$ uniquely induce such automorphisms $u(h)\in H^{2n}(Y,\ZZ)$.   The automorphism again is just the clutching map which generates the gerbe corresponding to the twist.   

The twist of a higher twisted K-theory on a $(2n+1)$-manifold $X$ is entirely characterized by a cohomology class in $H^{2n+1}(X)=\ZZ$.   Higher twisted K-theory satisfies all of the usual axioms of a twisted cohomology theory, including crucially the Mayer-Vietoris property as was shown in Theorem 2.7 of Ref.~\cite{DP2}.

\subsection{Spherical T-admissibility}
The critical step in the proof of the K-theory isomorphism in Ref.~\cite{BS} is the demonstration that twisted K-theory is $T$-admissible, or in other words that T-duality yields an isomorphism of K-theory when $M$ is a point.   Needless to say, as the sphere bundle over a point is $n$-dimensional this twist will necessarily be trivial for twisted K-theory and for higher twisted K-theory.

The proof again uses the homotopy $h:I\times S^n\times \hat{S}^n\rightarrow S^{2n+1}$ from $i:S^n\rightarrow S^{2n+1}$ to $\hat{i}:\hat{S}^n\rightarrow S^{2n+1}$ given by the join construction.  The T-duality map is defined to be
$$
T=p_! u(h)^* \hat{p}^*:K(S^n,H)\rightarrow K(\hat{S}^n,\hat{H}).
$$ 
What is $u(h)$?  Recall that $u(h)$ is the automorphism of twisted K-theory which serves as the clutching function in the construction of the Thom class.  In the case at hand, $S(V)$ is just $S^{2n+1}$ and the Thom class is just its top class.  The $S^{2n+1}$ can be constructed as $S^n * \hat{S}^n$, in other words as an $S^n\times S^n$ fibration over an interval in which one $S^n$ degenerates at each end.   Fixing a point $x$ on the interior of the interval, $S^{2n+1}$ can be decomposed into $A$ and $B$ consisting of the fibers over the left and the right of the point, including the fiber over $x$ itself.   Now the clutching function is just the map from the $S^n\times \hat{S}^n$ over $x$ to $K(\ZZ,2n)$.  

Which  map is it?  The Thom class integrated over $S^{2n+1}$ gives the generator of the cohomology of a point, and so it generates $H^{2n+1}(S^{2n+1})=\ZZ$.  Therefore the clutching function generates $H^{2n}(S^{2n})=\ZZ$.  The corresponding automorphism on K-theory can be represented by the minimal bundle on $S^n\times\hat{S}^n$.  As an automorphism, it must have $c_0=1$.  Also $c_n=(n-1)!$, which is the minimal nonzero value.  All other classes vanish, indeed there is no cohomology available to support them.  No complex vector bundles has these properties, however there is a virtual bundle with these properties.  It is just our Poincar\'e virtual line bundle.

The demonstration of T-admissibility then follows from a straightforward calculation.   The action of $u(h)$ is just a tensor product of bundles, which multiplies Chern classes.  For concreteness, let $n$ be odd.  Then if $1$ is the generator of $K^0(S^n)$ and $u$ generates $K^1(\hat{S}^n)$ then 
$$
T(1)=B^{(n+1)/2}(u),\hspace{.4cm}
T(u)=B^{(n-1)/2}(1)
$$
where $B:K^n\rightarrow K^{n-2}$ is the Bott periodicity element.  This is a degree shifting isomorphism.   If $n$ is even and $1$ and $u$ are generators of $K^0(S^n)=\ZZ^2$ corresponding to the trivial and rank zero non trivial virtual bundles, while $K^1(S^n)=1$, then
$$
T(1)=B^{n/2}(u),\hspace{.4cm}
T(u)=B^{n/2}(1)
$$
which again is an isomorphism, this time with no degree shift.

Note that, unlike the case $n=1$, in general the Chern class does not give an isomorphism between $K^0(S^n\times\hat{S}^n)$ and $H^{2n}(S^n\times\hat{S}^n)=\ZZ$.  Rather the Chern class of the generator of K-theory corresponds to the element
$$
c_n=(n-1)!\in H^{2n}(S^n\times\hat{S}^n)=\ZZ.
$$
On the other hand, the loop space argument above indicates that the automorphisms of the twist are the entire cohomology group.   Of course these are related, the class of the automorphism is just the Chern class divided by $(n-1)!$.    

\subsection{Twisted K-theory}

The final step of the proof considers a general $(n+1)$-dimensional oriented base $M$.  One uses (\ref{hiso}) to generate an automorphism $u$ which plays the role of $u(h)$ above

\begin{equation*}
\xymatrix{  & P\times_M \widehat P \ar[dl]_{\widehat p} \ar[dr]^{{p}} &  \\
P \ar[dr]_\pi  && \widehat{P}  \ar[dl]^{\widehat \pi} \\
& M &  }
\end{equation*}

This is the parameterized version of the situation considered earlier. In particular, we have a homotopy
$h:I\times P\times_M\widehat P\rightarrow S(V)$ from $i\circ \hat{p}$ to $\hat i\circ p$.
It induces the morphism
$$
u:p^* \widehat\cH=p^*\hat i^*\cK\stackrel{}{\cong} (\hat i\circ p)^*\cK\stackrel{u(h)}{\cong}
(i\circ \hat p)^*\cK\stackrel{}{\cong}
\hat p^*i^*\cK=\hat p^*\cH \,, 
$$
which is natural under pullback of bundles.  Here $u(h)$ fiberwise is again the twisted automorphism corresponding to the generator of the automorphism group $H^{2n}$.  It induces a map $u^*$ on K-theory given by a tensor product with the Poincar\'e virtual line bundle.

We define the spherical T-duality transformation on $(2n+1)$-twisted K-theory on $(2n+1)$-dimensional manifolds as
$$T:=p_!\circ u^*\circ \hat p^*:K(P,\cH)\rightarrow K(\widehat P,\widehat
\cH)\ .$$

The main theorem of the present section is the following.
Assume that $M$ is  homotopy equivalent to a finite complex.
\begin{theorem}\label{thm:spherical-T-duality}
The spherical T-duality transformation T is an isomorphism.
\end{theorem}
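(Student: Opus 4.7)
The plan is to reduce the global statement to the fiberwise case, which is precisely the spherical T-admissibility lemma proved just above, via a Mayer--Vietoris induction over a finite good open cover of the base $M$.

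First I would record the naturality of the entire construction. The twist $\cK$ and the homotopy $h$ live on the sphere bundle $S(V)\to M$, and all four structure maps $p,\hat p,i,\hat i$ are compatible with base change. In particular, for every open $U\subset M$ the spherical T-duality transformation $T$ restricts to the analogous transformation $T_U:K(P|_U,\cH|_U)\to K(\widehat P|_U,\widehat\cH|_U)$; this is the content of the sentence in the text saying that $u$ is natural under pullback of bundles. Consequently, $T$ is compatible with the Mayer--Vietoris long exact sequences in 7-twisted K-theory associated with open covers of $M$.

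Next, because $M$ is homotopy equivalent to a finite CW complex, I may choose a finite \emph{good} open cover $\{U_1,\dots,U_N\}$ of $M$, meaning one in which every non-empty finite intersection is contractible. Over any contractible $U$ the principal $\sfSU(2)$-bundles $P|_U$ and $\widehat P|_U$ are trivial, and the K\"unneth theorem together with $H^{>0}(U,\ZZ)=0$ forces the 7-twists $\cH|_U$ and $\widehat\cH|_U$ to be (canonically) trivialisable as well. On such a patch the formula $T=\hat p_!\circ u^*\circ p^*$ reduces, via naturality, to the spherical Fourier--Mukai transform $\cF$ of Section \ref{sect:FM}, which has already been shown to be a degree-shifting isomorphism.

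I would then induct on $N$. The base case $N=1$ is the previous paragraph together with the spherical T-admissibility lemma at the fiber level. For the inductive step, set $U:=U_1\cup\dots\cup U_{N-1}$ and $V:=U_N$. Both $U$ and $U\cap V$ are covered by strictly fewer sets from the original good cover, so by the inductive hypothesis $T_U$, $T_V$ and $T_{U\cap V}$ are all isomorphisms. Arranging the two Mayer--Vietoris sequences into a commutative ladder linked by $T_U$, $T_V$, $T_{U\cap V}$ and $T$, the five lemma yields the isomorphism on $M$.

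The main obstacle is verifying that the Mayer--Vietoris ladder genuinely commutes, i.e.\ that the pushforward $\hat p_!$, the twist morphism $u^*$, and the pullback $p^*$ are all compatible with the boundary connecting homomorphisms of 7-twisted K-theory. This compatibility is built into the Bunke--Schick formalism of \cite{BS}, but one must translate their parameterized torus argument to the spherical setting: the key inputs are that $\hat p:P\times_M\widehat P\to\widehat P$ is a smooth submersion whose fibers are canonically $K$-oriented (the fiber tangent bundle is trivialised by the $\sfSU(2)$-action, as observed prior to the spherical T-admissibility lemma), and that excision holds for $K^*(-,\cH)$ on the relevant open pairs. Once these are in hand, everything else is formal.
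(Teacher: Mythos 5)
Your argument is a faithful unpacking of exactly what the paper's one-line proof delegates to: the paper simply cites Theorem 3.13 of Bunke--Schick \cite{BS}, whose proof is precisely the Mayer--Vietoris induction you describe (naturality under base change, local triviality over contractible patches where the T-admissibility Lemma resp.\ the spherical Fourier--Mukai isomorphism of Section \ref{sect:FM} applies, and the five lemma). So this is the same approach, with the details of \cite{BS} written out rather than cited.
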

\begin{proof}The twisted K-theory isomorphism follows from the Mayer-Vietoris property of twisted K-theory together with two lemmas proved in \cite{BS} demonstrating that pullbacks and the Mayer-Vietoris maps both commute with T-duality.  The proofs of these statements, using the higher-dimensional definitions above, are identical in our case and the isomorphism of higher twisted K-theories follows.  When $n$ is odd, the degrees are shifted as was seen in the K-admissibility proof.  This is a consequence of the fact that, in the T-duality map, only the pushforward changes the degree and it shifts the degree by $n$.
\end{proof}

\subsection{The spherical T-duality group}

We now restrict our attention to $n=3$.

Consider $\sfSU(2)$ as the unit quaternions ie $\mathsf{Sp}(1)$. Then quaternionic conjugation
is an orientation reversing automorphism of $\sfSU(2)$. So given a principal $\sfSU(2)$-bundle
$P$ over a 4-dimensional manifold $M$, let $x.g$ denote the right action of
$g\in \sfSU(2)$ on $x\in P$. Then $x.{\bar g}$ also gives a right action of
$\sfSU(2)$ on $P$, where ${\bar g}$  is the quaternionic conjugate of $g$. It is again a free
action, so it defines a principal $\sfSU(2)$-bundle with the same total space and with
2nd Chern class the negative of 
$c_2(P)$. This gives the action of the non-trivial element $-1 \in \mathsf{GL}(1, \ZZ)$ on
spherical T-dualities (with 4D base). Now $-1\in  \mathsf{GL}(1, \ZZ)$ corresponds to the element
$(-1,-1) \in \mathsf O(1,1,\ZZ)$ via the canonical embedding of $\mathsf{GL}(1, \ZZ)$  in $\mathsf O(1,1,\ZZ)$. The other generator
of $\mathsf O(1,1,\ZZ)$ is the $2\times 2$ matrix with $1$'s on the off-diagonal and $0$'s on the diagonal.
This element exchanges the 2nd Chern class and the 7-flux i.e.\ is the spherical T-duality element.
Therefore $\mathsf O(1,1,\ZZ)$  is the spherical T-duality group.\\

\section{The putative relation to SUGRA and M-theory}

In this section, we suggest the relevance of spherical T-duality to 11 dimensional supergravity and M-theory.

Recall that the action of eleven dimensional supergravity is (cf. \cite{CJS}),
$$\label{eleven}
I_{11}=I_{grav} + I_{G_4} + I_{C.S.}+ I_{fermi} + I_{coupling}
$$
where
\begin{eqnarray}
I_{grav} &= &\frac{1}{2 \kappa_{11}^2} \int_{Y}  {\hat{\cR}} 
\;d {\rm vol} \\
I_{G_4} &= & - \frac{1}{2 \kappa_{11}^2} \frac{1}{2 \cdot 4!} \int_{Y}
| G_4 |^2\; d {\rm vol}\\
I_{C.S.} &=& -\frac{1}{12 \kappa_{11}^2} \int_{Y} C_3 \wedge G_4 \wedge
G_4
\\
I_{fermi} & =& \frac{1}{2 \kappa_{11}^2} \frac{1}{2} \int_{Y}
\bar{\psi}
D_{R-S} \psi \; d {\rm vol}
\end{eqnarray}
Here $Y$ is 11-dimensional spacetime, $d {\rm vol}= d^{11}x \sqrt{-g}$, $\hat{\cR}$ is the scalar
curvature of $Y$, $G_4$ is the
four-form
field
strength, which, when cohomologically trivial, is equal to $dC_3$. The
fermions involve the kinetic action of $\psi$ involving
the Rarita-Schwinger operator $D_{R-S}$.  One can view $D_{R-S}$
as 
the Dirac operator coupled to the vector bundle
associated to the
virtual bundle $TY - 3{\O}$, where the ${\O}$ factors correspond to
subtraction of ghosts.
$I_{coupling}$ corresponds to coupling of
$\psi$
to $G_4$ as well as quartic $\psi$ self-couplings and we refer the reader to \cite{CJS} for details.

What concerns us here are the equations of motion of the source-free Bianchi identity, which are,
\begin{equation}\label{bianchi}
dG_4=0, \qquad d(\star G_4) = -\frac{1}{2} G_4\wedge G_4.
\end{equation}
These can be modified by adding sources, namely the membrane $M2$
and the fivebrane $M5$, respectively.
Consider the   {\em Freund-Rubin solutions} to these equations, 
$$
Y= X^4 \times W^7, \qquad G_4= t\, \text{vol}_X
$$
where $X^4={\rm Ad}S_4= O(2,3)/O(1,3)$ is anti-deSitter space and $W^7$ is a 7-dimensional Einstein manifold, and $t\in\RR$ is a constant.
This is called the near horizon geometry of a stack of M2-branes.
Notice that since $G_4\wedge G_4=0$, therefore $d(\star G_4) = 0$ and the source-free Bianchi equations \eqref{bianchi} above are satisfied. 

\begin{table}
\centering
\begin{tabular}{|c|c|c|}
\hline\hline
& String Theory & M-Theory /11D SUGRA \\ 
& $X^4 \times E^6$ & $X^4 \times W^7$ \\ \hline
& Complex manifold & Contact manifold \\
$\mathcal N=1$ & K\"ahler  & Sasakian   \\
$\mathcal N=2$ &Calabi-Yau  & Sasaki-Einstein  \\
$\mathcal N=3$ & Hyper-K\"ahler  & 3-Sasakian\\ \hline\hline
&$S^1$ & $S^3$ \\
& Strings & M2- and M5-branes \\
& $H\in \mathrm{H}^3(E,\ZZ)$ & $H\in \mathrm{H}^7(W,\ZZ)$ \\
& Mirror Symmetry / T-duality & Spherical T-duality? \\ 
 \hline\hline 
& $\xymatrix{ S^1 \ar[r] & E \ar[d] \\ & M }$ & 
$\xymatrix{ S^3 \ar[r] & W^7 \ar[d] \\ & \CC P^2 }$\\ \hline\hline
\end{tabular} 
\end{table}

Setting $H_7=\star G_4$, we get a closed 7-form $H_7$ on $W$. Taking $W$ to be 
an Aloff-Wallach space as in \cite{BEM142} and applying spherical T-duality, we get a
spherical T-dual Aloff-Wallach space $\widehat W$ and spherical T-dual flux $\widehat H_7$.

Then $$\widehat Y = X^4 \times \widehat W^7, \qquad \widehat G_4=\star \widehat H_7$$
also satisfy the source-free Bianchi equations \eqref{bianchi} above, and therefore is a topological 
distinct M2-brane near horizon geometry.  A naive analogy with T-duality would then suggest an equivalent between (some subsector of) these two topologically distinct M-theory compactifications.

\appendix
\section{On the geometry of Poincar\'e bundles}

As an illustration of our current construction, we give another construction of the Poincar\'e line bundle with 
connection next. Consider $\sfG=\sfSU(2)$.  Using the parametrization
\begin{equation*}
g = e^{i \phi \sigma^3/2} e^{i \theta\sigma^1/2}  e^{i \psi \sigma^3/2},\hspace{.4cm}
\phi\in[0,2\pi),\hspace{.4cm}
\theta\in[0,\pi),\hspace{.4cm}
\psi\in[0,4\pi) \,,
\end{equation*}
the Maurer-Cartan form for $\sfG$ can be written as
\begin{equation*}
\omega = g^{-1}dg = \sum_i e^i \left( \frac{i \sigma^i}{2}  \right)\,,
\end{equation*}
where, in particular,
\begin{equation*}
e^3 = d\psi + \cos\theta \, d\phi \,.
\end{equation*}
We can use $A = e^3/2$ as a principal connection on the principal $\sfU(1)$-bundle
$S^3$ over $S^2$, where the normalization is chosen such that the integral of $A$ over the fiber is equal to one.  Then
\begin{equation*}
F = dA = -\frac{\sin\theta}{2} \, d\theta \wedge d\phi\,,
\end{equation*}
and
\begin{equation*}
c_1 = \frac{1}{2\pi} \int_{S^2} F = -1 \,.
\end{equation*}

To obtain the Poincar\'e bundle on $S^1\times S^1$, we need to pull this bundle back by the smash product 
$f : S^1 \times S^1 \to S^1 \wedge S^1 \cong S^2$.  This is the case $n=1$ of the general construction treated 
in Sec.~\ref{sect:poincare}.  If $\beta\in[0,2\pi]$ and $\gamma\in[0,2\pi]$ are the coordinates for the two copies of 
$S^1$, then we can define the two intervals by the maps
\[
r_1:S^1\to I:\theta\mapsto {\rm{sin}}\left(\frac{\beta}{2}\right),\hspace{.4cm}
r_2:S^1\to I:\phi\mapsto {\rm{sin}}\left(\frac{\gamma}{2}\right) \,.
\]
Fibered over each interval is an $S^0$ with coordinates ${\mathbf{v}}_i\in\{-1,1\}$.

The conditions~(\ref{condeq}) are that when $r_1=1$, corresponding to $\theta=\pi$, $\tilde{\alpha}_1=0$ 
and also when  $r_2=1$, corresponding to $\phi=\pi$, $\tilde{\alpha}_2=0$.
We satisfy these conditions by choosing
\begin{equation}
\tilde{\alpha}_1=2\left|{\mathrm{cos}}\left(\frac{\beta}{2}\right)\right|\,,\hspace{.4cm}
\tilde{\alpha}_2=2{\mathrm{sin}}\left(\frac{\beta}{2}\right)\left|{\mathrm{cos}}\left(\frac{\gamma}{2}\right)\right|\,. \label{toro}
\end{equation}
Note that the absolute values are multiplied by elements ${\mathbf{v}}_i=\pm 1\in S^0$ in Eqn.~(\ref{sferdecomp}).  
The effect of this multiplication is simply to remove the absolute values, resulting in a smooth map.
Inserting this into Eq.~(\ref{fdef}) and imposing (\ref{sfera}) we obtain the smash product map
\[
f(\beta,\gamma)=\left({\rm{sin}}\left(\beta\right){\rm{sin}}\left(\frac{\gamma}{2}\right),
{\rm{sin}}^2\left(\frac{\beta}{2}\right){\rm{sin}}\left(\gamma\right),-1+2{\rm{sin}}^2\left(\frac{\beta}{2}\right)
{\rm{sin}}^2\left(\frac{\gamma}{2}\right)\right) .
\]

In terms of spherical coordinates on the $S^2$ this map is
\begin{eqnarray}
(\theta,\phi)&=&\left({\rm{arccos}}(-z),{\rm{arctan}}\left(\frac{y}{x}\right)\right)\nonumber\\
&=&\left({\rm{arccos}}\left(1-2\rm{sin}^2\left(\frac{\beta}{2}\right)\rm{sin}^2
\left(\frac{\gamma}{2}\right)\right),\rm{arctan}\left(\rm{tan}\left(\frac{\beta}{2}\right)\rm{cos}\left(\frac{\gamma}{2}\right)\right) 
\right) \,. \nonumber
\end{eqnarray}
To pullback the curvature we will need the derivatives of this map
\begin{eqnarray}
\frac{\partial\theta}{\partial\beta}&=&-\frac{\rm{cos}\left(\frac{\beta}{2}\right)
\rm{sin}\left(\frac{\gamma}{2}\right)}{\sqrt{1-\rm{sin}^2\left(\frac{\beta}{2}\right)\rm{sin}^2\left(\frac{\gamma}{2}\right)}}\,,\hspace{.4cm}
\frac{\partial\theta}{\partial\gamma}=-\frac{\rm{sin}\left(\frac{\beta}{2}\right)
\rm{cos}\left(\frac{\gamma}{2}\right)}{\sqrt{1-\rm{sin}^2\left(\frac{\beta}{2}\right)\rm{sin}^2\left(\frac{\gamma}{2}\right)}}\,,\nonumber\\
\frac{\partial\phi}{\partial\beta}&=&\frac{\rm{cos}\left(\frac{\gamma}{2}\right)}{2\left(1-
\rm{sin}^2\left(\frac{\beta}{2}\right)\rm{sin}^2\left(\frac{\gamma}{2}\right)\right)}\,,\hspace{.4cm}
\frac{\partial\phi}{\partial\gamma}=-\frac{\rm{sin}\left(\frac{\beta}{2}\right)\rm{cos}\left(\frac{\beta}{2}\right)
\rm{sin}\left(\frac{\gamma}{2}\right)}{2\left(1-\rm{sin}^2\left(\frac{\beta}{2}\right)\rm{sin}^2\left(\frac{\gamma}{2}\right)\right)}\nonumber \,.
\end{eqnarray}

Finally we can compute the curvature on the Poincar\'e bundle as the pullback of the curvature on the Hopf bundle
\begin{equation*}
f^* F = - \frac{\rm{sin}(\theta)}{2}\left(\frac{\partial\theta}{\partial\beta}\frac{\partial\phi}{\partial\gamma}-
\frac{\partial\theta}{\partial\beta}\frac{\partial\phi}{\partial\gamma}\right)d\beta\wedge d\gamma=-
\frac{1}{2}\sin^2\left(\frac{\beta}{2}\right)\cos\left(\frac{\gamma}{2}\right) d\beta\wedge d\gamma .
\end{equation*}
As a consistency check, we can integrate this curvature to obtain the Chern class
\begin{equation*}
c_1 = \frac{1}{2\pi} \int_{T^2} f^*F = -1 \,.
\end{equation*}

As $S^3$ is the group manifold of $\sfSU(2)$, 
we may use the group structure to re-express the maps used in the general construction above.  
One realization of the decomposition of $S^3$ into an $S^2$ fibration over an interval is the decomposition 
of $\sfSU(2)$ into conjugacy classes corresponding to elements with eigenvalues $e^{\pm i \pi r}$.  
These conjugacy classes are of topology $S^2$ for $r\in (0,1)$ and are points, consisting of the elements 
$\pm\mathbf{1}\in\sfSU(2)$, for $r=\{0,1\}$.  More specifically, for each $g\in\sfSU(2)$ we define $r\in I$ and $v\in S^2\subset\RR^3$ by
\[
g={\rm{exp}}(i r {\mathbf{v}}\cdot {\mathbf{\sigma}}) \,,
\]
where ${\mathbf{\sigma}}$ are the Pauli matrices such that $i{\mathbf{\sigma}}$ generates the Lie algebra $\mathfrak{su}(2)$.  
Using this decomposition, to each point $x\in S^3\times S^3$ we can identify a quadruplet $(r_1,{\mathbf{v}}_1,r_2,{\mathbf{v}}_2)$ 
where all values of ${\mathbf{v}}_i$ are identified when $r_i=0$ or $r_i=1$, as in the general construction in Subsec.~\ref{smashsez}.

To complete the construction, we need to define the pair $\tilde{\alpha}_i$ of functions on $I\times I$.  
The functions $\tilde{\alpha}_i$ can be defined as in Eqn.~(\ref{toro}) in the case $n=1$
\[
\tilde{\alpha}_1=2\sqrt{1-r_1^2}\,,\hspace{.4cm}
\tilde{\alpha}_2=2r_1\sqrt{1-r_2^2} \,.
\]
The third function, $\tilde{f}$, is defined by (\ref{sfera}), choosing the branch which gives a winding number of 1
\[
f_{2n+1}=-1+2r_1^2r_2^2 \,,
\]
as in the case $n=1$, thus completing the construction of the smash product $f:S^3\times S^3\to S^6$
\[
f(r_1,{\mathbf{v}}_1,r_2,{\mathbf{v}}_2)=\left(2r_1r_2\sqrt{1-r_1^2}\ {\mathbf{v}}_1,2r_1^2r_2\sqrt{1-r_2^2}\ {\mathbf{v}}_2,-1+2r_1^2r_2^2\right)\,.
\]
The connection on the Poincar\'e bundle is then $f^* A$.

If we want to calculate this connection explicitly, then we may proceed as in the torus case of the previous subsection.  First we construct an arbitrary element of $G_2$ as
\begin{eqnarray*}
g &=& e^{\left(\pi i
\ {\rm{arccos}}\left(-1+2r_1^2r_2^2\right)
\frac{\sqrt{1-r_1^2}{\mathbf{v}}_1\cdot{\mathbf{a}}_1+r_1\sqrt{1-r_2^2}{\mathbf{v}}_2\cdot{\mathbf{a}}_2}{\sqrt{1-r_1^2r_2^2}}
\right)}
e^{\left(\pi i\left(-c_8M_3+\sum_{i=1}^7c_i F_i
\right)\right)}\,, \\
{\mathbf{a}}_1&=&(M_1,M_2,M_4)\,,\hspace{.4cm}
{\mathbf{a}}_2=(M_5,M_6,M_7) \,,
\end{eqnarray*}
where $F_i$ and $M_i$ are generators of $G_2$ defined in Ref.~\cite{G2}, where it was noted that $F_i$ together 
with $-M_3$ generate an $\sfSU(3)$ subgroup.  

As the Maurer-Cartan form is $\sfSU(3)$-invariant, to obtain the horizontal part of the connection it will be sufficient 
to restrict our attention to $c_i=0$, where $g$ is a section of the bundle $\sfG_2\longrightarrow S^6$ restricted 
to the compliment of the north pole.  As in the toroidal case, it will be convenient to work in spherical coordinates.  Therefore we define
\begin{equation*}
{\bf{v}}_i=({\rm{sin}}(\theta_i){\rm{cos}}(\phi_i),{\rm{sin}}(\theta_i){\rm{sin}}(\phi_i),{\rm{cos}}(\theta_i))\,.
\end{equation*}
Thus we find
\begin{equation*}
g = e^{\left(\pi i
\ {\rm{arccos}}\left(-1+2r_1^2r_2^2\right)
\frac{\sqrt{1-r_1^2}\left({\rm{s}}(\theta_1){\rm{c}}(\phi_1)M_1+{\rm{s}}(\theta_1){\rm{s}}(\phi_1)M_2+
{\rm{c}}(\theta_1)M_4\right)+r_1\sqrt{1-r_2^2}\left({\rm{s}}(\theta_2){\rm{c}}(\phi_2)M_5+{\rm{s}}(\theta_2){\rm{s}}(\phi_2)M_6+{\rm{c}}(\theta_2)M_7\right)
}{\sqrt{1-r_1^2r_2^2}}
\right)}\,,
\end{equation*}
where $s(\theta)$ and $c(\theta)$ represent ${\rm{sin}}(\theta)$ and ${\rm{cos}}(\theta)$ respectively.

If we define $h$ by $ g=e^{ih} $ then we can write the connection as
\begin {equation*}
A_k=pg(\partial_k h)g^{-1}\,.
\end {equation*}
As, by abuse of notation, we have adopted the same notation for coordinates of $ S^6$ and $ S^3\times S^3$, 
the pullback by the smash product acts trivially so this same expression is also the connection of our Poincar\'e bundle.   
Finally, the curvature of the Poincar\'e bundle is
\begin {equation*}
F_{jk}=pg [\partial_j h,\partial_k h] g^{-1}\,. 
\end {equation*}

\section*{Acknowledgments}

The research of JE is supported by NFSC MianShang grant 11375201.
The research of PB and VM is supported by ARC Discovery Project grant DP150100008. 
V.M. also acknowledges support from the Australian Laureate Fellowship FL170100020.



\end{document}